\newtheorem{theorem}{Theorem}[section]
\newtheorem{proposition}[theorem]{Proposition}
\newtheorem{corollary}[theorem]{Corollary}
\newtheorem{remark}[theorem]{Remark}
\newtheorem{observation}[theorem]{Observation}
\newtheorem{conjecture}{Conjecture}[section]
\newtheorem{problem}[conjecture]{Problem}
\begin{document}
\title{\Large\bf Complexity and Algorithm for the Matching vertex-cutset Problem }

\author{Hengzhe Li$^{a}$, Qiong Wang$^{a}$, Jianbing Liu$^{b}$, Yanhong Gao$^{a}$\\
\small $^{a}$College of Mathematics and Information Science,\\
\small Henan Normal University, Xinxiang 453007, P.R. China\\
\small $^b$ Department of Mathematics, \\
\small University of Hartford, West Hartford 06117, USA\\
\small Email: lihengzhe@htu.edu.cn, wang1472581113@163.com,\\ \small jianliu@hartford.edu, gaoyanhong@htu.edu.cn}
\date{}
\maketitle
\begin{abstract}
In 1985, Chv\'{a}tal introduced the concept of star cutsets as a means to investigate the properties of perfect graphs, which inspired many researchers to study cutsets with some specific structures, for example, star cutsets, clique cutsets, stable cutsets. In recent years, approximation algorithms have developed rapidly, the computational complexity associated with determining the minimum vertex cut possessing a particular structural property have attracted considerable academic attention.

In this paper, we demonstrate that determining whether there is a matching vertex-cutset in $H$ with size at most $k$, is $\mathbf{NP}$-complete, where $k$ is a given positive integer and $H$ is a connected graph. Furthermore, we demonstrate that for a connected graph $H$, there exists a $2$-approximation algorithm in $O(nm^2)$ for us to find a minimum matching vertex-cutset. Finally, we show that every plane graph $H$ satisfying $H\not\in\{K_2, K_4\}$ contains a matching vertex-cutset with size at most three, and this bound is tight.
{\flushleft\bf Keywords}: Star cutset, Matching-Cut, Matching vertex-cutset, Graph algorithm\\[2mm]
{\bf AMS subject classification 2020:} 05C69, 05C70, 05C75, 05C85
\end{abstract}

\section{Introduction}
In the whole paper, every graph is finite, undirected, and simple. For any undefined notation and terminology, we refer to the books \cite{Bondy} and  \cite{Du}. For a graph $H=(V(H), E(H))$, its {\it minimum degree}, {\it maximum degree} and {\it connectivity} are denoted by $\delta(H)$, $\Delta(H)$, and $\kappa(H)$, respectively. A {\it path} is often regarded as an $(u, v)$-path if its ends are $u$ and $v$. For some vertex $x\in V(H)$, $N_{H}(x)$=$\{y\in V(H):xy\in E(H)\}$, $N_{H}[x]=N_H(x)\cup\{x\}$.
The {\it degree} of $x$ in $H$ is the size of $N_{H}(x)$.

For $X\subseteq V(H)$, the {\it induced subgraph} $H[X]$ by $X$ is the subgraph of $H$ with vertex set $X$, and edge set $\{uv\in E(H): u,v\in X\}$. For any $Y, Z\subseteq V(H)$, $E_H[Y,Z]=\{yz\in E(H):y\in Y\ and\ z\in Z\}$.

A {\it matching} in a graph is a set of non-adjacent edges. If $M$ is a matching, then every vertex adjacent to an edge of $M$ is be {\it covered} by $M$. We represent the set of these vertices by $V(M)$. For a matching $M$ of a graph $H$, if $V(M) = V(H)$, then we call it a {\it perfect matching}; if $|M|$ is maximized, then we call it a {\it maximum matching}. A {\it maximal matching} is a larger matching that can not be extended any more.

Let $S$ be a vertex set of a connected graph $H$, if $H$ becomes disconnected or trivial after deleting $S$, we regard $S$ as a {\it vertex cut}; if there exists $u\in S$ satisfying $S\subseteq N_H[u]$, then we call $S$ a {\it star cutset}; if $H[S]$ is a complete graph, then we regard $S$ as a {\it clique cutset}. {\it A stable set} of $H$ is a set of vertices where any two members are not adjacent, we also call stable sets {\it independent sets}. Similarly, if $H[S]$ is a stable set, then we call $S$ a {\it stable cutset}.

Many researchers have also paid attention to vertex cut-sets with some specific structures. For example, in 1985, Chv\'{a}tal\cite{Chvatal} introduced star cutsets to study perfect graphs. Whitesides \cite{Whitesides} offered an $O(n^3)$-time algorithm to find a clique cutset. Brandstadt\cite{Brandstadt} showed that whether a connected graph possesses a stable cutset is an $\mathbf{NP}$-complete problem.

Edge cut-sets with some specific structures  have also drawn the interest of researchers. A {\it matching-cut} $M$ of a connected graph $H$ is a matching $M$ of $H$ and satisfies that $H-M$ becomes disconnected. Patrignani and Pizzonia\cite{Patrignani} demonstrated that it is $\mathbf{NP}$-complete to determine whether a connected graph $H$ contains a matching cut, and the application of matching cuts in graph drawings were discussed. The matching cut of the network application environment diagram were studied by Farley and Proskurowski\cite{Farley}.

Lin et al. \cite{Lin} put forward the notions of structure connectivity in 2016. Let $F$ be a connected subgraph of a connected graph $H$, we represent the {\it $F$-structure connectivity} of the graph $H$ by $\kappa(H; F)$, is the minimum size of a set of disjoint subgraphs $\mathcal{F}$=$\{f_1$,$f_2$,$\cdots$,$f_{m}\}$ in $H$, satisfying each $f_{i}\in\mathcal{F}$ is isomorphic to $F$ and $H-\cup_{f_{i}\in\mathcal{F}}V(f_{i})$ is disconnected or trivial. If $F\cong P_2$, then we regard $P_2$-structure cutset as {\it matching vertex-cutset} and regard $P_2$-structure-connectivity as {\it matching connectivity}, and $\kappa(H, P_2)$ is often written by $\kappa_M(H)$. The structure connectivity of several prominent networks, for example, \cite{Chelvam, Dilixiati, Feng} et al., have been established recently.

Matching vertex-cutsets and matching-cuts share similarities as they are both related to matching, but these two concepts also have significant differences. For a matching $M$ in a graph $H$, $M$ being matching-cut means that $H-M$ is disconnected, while $H[M]$ being matching vertex-cutset means that $H-V(M)$ becomes disconnected or trivial.

Let $M$ be a matching such that $H[M]$ is a matching vertex-cutset. For convenience in discussion, we are slightly misusing notation by referring to
$M$ as the matching vertex-cutset, even though $M$ is actually a set of edges.

Inspired by the study of vertex cuts with some structures, we study the matching vertex-cutset in our paper. The content in this thesis is as follows. In Section two, we summarize some notations as well as known theorems.
In Section three, we show that after giving a graph $H$ and a positive integer $k$, determining whether there is a set $M$ of at most $k$ independent edges satisfying that $H-V(M)$ becomes disconnected or trivial is $\mathbf{NP}$-complete.  What's more, we demonstrate that there exists a $2$-approximation algorithm in $O(nm^2)$ to find a minimum matching vertex-cutset in a graph $H$. In Section four, we demonstrate that each plane graph $H$ satisfying $H\not\in\{K_2,K_4\}$ possesses a matching vertex-cutset with size at most three and the bound is sharp; the lower bound of the matching connectivity of maximal planar graphs is $2$ and the bound is sharp.

\section{Preliminaries}
We sum up several basics used in our discussion.

In a graph $H=(V, E)$, for a set of edges $M$, if all edges in $E-M$ are associated with some member of $M$, then $M$ is an {\it edge dominating set}. If any two edges in edge dominating set are non-adjacent, then we call it {\it independent edge dominating set}.

For a graph $H$, if $V(H)$ can be divided into two subsets $U$ and $V$ satisfying that one end of each edge is in $U$ and the other end is in $V$, then we call $H$ a {\it bipartite graph}, $(U, V)$ are bipartitions of $H$. Moreover, if for all vertices pairs $(u, v)$, where $u\in U, v\in V$, $uv\in E(H)$, then we call $H$ a {\it complete bipartite graph}. The following theorem presents an equivalent condition about the property of a bipartite graph.

\begin{theorem}[\upshape Hall~\cite{Hall}]\label{Hall}
There exists a matching covering all vertices of $U$ in a bipartite graph $H=H[U, V]$ is equivalent to that $|N_{H}(S)|\geq|S|$ for all $S\subseteq U$.
\end{theorem}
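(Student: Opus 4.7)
The statement is the classical marriage theorem of Hall, so I would follow the standard textbook argument by induction on $|U|$, treating necessity separately as an immediate counting observation.

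For necessity, suppose $M$ is a matching covering $U$. Given any $S\subseteq U$, the matching $M$ assigns to each vertex of $S$ a distinct partner in $V$, and these partners all lie in $N_H(S)$; this immediately yields $|N_H(S)|\geq |S|$. This direction is a one-line verification and I do not expect any difficulty.

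The substantive content is sufficiency, which I plan to prove by induction on $|U|$. The base case $|U|=1$ is immediate: Hall's condition applied to $S=U$ says the unique vertex in $U$ has at least one neighbor, so a single edge supplies the required matching. For the inductive step, I would split into two cases according to whether Hall's inequality is strict or tight on proper subsets. In the \emph{strict} case, where $|N_H(S)|>|S|$ for every nonempty proper subset $S\subsetneq U$, pick any $u\in U$ and any neighbor $v\in N_H(u)$, match them, and delete both. For any $S\subseteq U\setminus\{u\}$ one has $|N_{H-\{u,v\}}(S)|\geq |N_H(S)|-1\geq |S|$, so Hall's condition survives and induction delivers a matching of $U\setminus\{u\}$ into $V\setminus\{v\}$, which together with $uv$ covers $U$. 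In the \emph{tight} case, choose a nonempty proper $S\subsetneq U$ with $|N_H(S)|=|S|$; apply induction to the bipartite graph $H[S\cup N_H(S)]$ to obtain a matching $M_1$ saturating $S$, and apply induction to the bipartite graph $H[(U\setminus S)\cup (V\setminus N_H(S))]$ to obtain a matching $M_2$ saturating $U\setminus S$. The union $M_1\cup M_2$ is a matching covering $U$.

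The main obstacle, and the only step that requires genuine work, is verifying that Hall's condition really is inherited by the residual bipartite graph $H[(U\setminus S)\cup (V\setminus N_H(S))]$ in the tight case. For any $T\subseteq U\setminus S$, one must show $|N_H(T)\setminus N_H(S)|\geq |T|$. I would derive this from Hall's condition applied to $S\cup T$: the inequality $|N_H(S\cup T)|\geq |S|+|T|$ together with $N_H(S\cup T)=N_H(S)\cup N_H(T)$ and $|N_H(S)|=|S|$ forces $|N_H(T)\setminus N_H(S)|\geq |T|$. Once this inheritance is in hand the rest is just bookkeeping, and the two sub-matchings $M_1,M_2$ are automatically disjoint because their vertex sets are contained in disjoint sides of the partition, so no additional argument is needed to combine them.
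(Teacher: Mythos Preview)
Your argument is the standard inductive proof of Hall's theorem and is correct in every detail, including the key inheritance step for the residual bipartite graph in the tight case. However, there is nothing to compare against: in the paper this theorem appears in the Preliminaries section as a cited result (attributed to Hall~\cite{Hall}) and is stated without proof, so the paper offers no argument of its own for you to match or diverge from.
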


Let $M$ be a matching in a graph $H$, $P$ be a path of $H$, if each edge in $P$ occurs in $M$ and $E\setminus M$ in turn, then we call $P$ an {\it $M$-alternating path}. If any ends are not covered by $M$, then $P$ is called an {\it $M$-augmenting path}. Berge provided an equivalent condition about a maximum matching in \cite{Berge}.

\begin{theorem}[\upshape Berge~\cite{Berge}]\label{Berge}
A matching $M$ in a graph $H$ is maximum when and only when $H$ possesses no $M$-augmenting path.
\end{theorem}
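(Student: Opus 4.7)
The plan is to prove Theorem~\ref{Berge} as a pair of contrapositive implications. For the ``only if'' direction, I would assume that $H$ contains an $M$-augmenting path $P$ and produce a strictly larger matching, contradicting the maximality of $M$. For the ``if'' direction, I would assume $M$ is not maximum and construct an $M$-augmenting path; this is the substantive half of the argument.

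For the easy direction, let $P=v_0v_1\cdots v_{2k+1}$ be an $M$-augmenting path. By definition its edges alternate between $E(H)\setminus M$ and $M$, starting and ending with edges outside $M$ (because $v_0$ and $v_{2k+1}$ are not covered by $M$). I would set $M'=M\triangle E(P)$ and verify directly that $M'$ is a matching: every internal vertex of $P$ keeps exactly one incident edge (the roles of its two $P$-edges simply swap), while $v_0$ and $v_{2k+1}$ gain one incident edge but were previously uncovered. Since $P$ contains one more edge outside $M$ than inside $M$, we get $|M'|=|M|+1$, contradicting that $M$ is maximum.

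For the harder direction, suppose for contradiction that some matching $M^{*}$ satisfies $|M^{*}|>|M|$. The key construction is the symmetric difference $F=M\triangle M^{*}$, viewed as a spanning subgraph of $H$. Every vertex of $F$ is incident to at most one edge of $M$ and at most one edge of $M^{*}$, so the maximum degree in $F$ is at most $2$, and hence each component of $F$ is either a path or an even cycle whose edges alternate between $M$ and $M^{*}$. In every cycle component the numbers of $M$-edges and $M^{*}$-edges coincide, as they do in every path component of even length; in a path component of odd length the two endpoints have different ``types'' of incident edge, contributing an excess of one for whichever of $M$ or $M^{*}$ supplies the first and last edge. Because $|M^{*}|>|M|$, a simple counting argument forces the existence of at least one component that is a path beginning and ending with an $M^{*}$-edge. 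Such a path is $M$-alternating, and its two endpoints are covered by $M^{*}$ but not by $M$, so it is an $M$-augmenting path in $H$, completing the contradiction.

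The main obstacle is organizing the symmetric-difference analysis cleanly: one must verify that $F$ really has maximum degree $2$ (this uses that both $M$ and $M^{*}$ are matchings, not just edge sets) and that the ``excess'' of $M^{*}$-edges in $F$ equals $|M^{*}|-|M|>0$ so that some odd-length path with the correct orientation is guaranteed. After those two structural observations the rest of the argument is a short parity count.
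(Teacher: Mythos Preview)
Your proposal is correct and is precisely the standard symmetric-difference argument due to Berge. Note, however, that the paper does not supply its own proof of Theorem~\ref{Berge}: the theorem is quoted as a known result with a citation to~\cite{Berge}, so there is no in-paper proof to compare against. Your write-up would serve perfectly well as a self-contained proof if one were desired.
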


\section{Complexity and algorithm}
\subsection{$\mathbf{NP}$-complete}
The following $\mathbf{NP}$-complete problem will be used in our proof.

{\bf The edge dominating set problem:} After giving a positive integer $k$ and a graph $H$, determine whether there exists an edge dominating set with at most $k$ edges in $H$.

\begin{theorem}[\upshape Yannakakis and Gavril~\cite{Yannakakis}]\label{edgedominatingproblem}
Given a positive integer $k$ and a bipartite graph $H$ with $\Delta(H)=3$, determine whether there exists an edge dominating set with at most $k$ edges in $H$
is $\mathbf{NP}$-complete.
\end{theorem}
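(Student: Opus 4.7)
The plan is to verify membership in \textbf{NP} routinely and then establish \textbf{NP}-hardness by a polynomial-time many-one reduction from a restricted version of 3-SAT (for instance, 3-SAT in which each variable occurs in at most three clauses, which is known to remain \textbf{NP}-complete). An alternative starting point would be vertex cover in cubic graphs, exploiting the classical fact that for any graph the minimum edge dominating set has the same cardinality as a minimum maximal matching.

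For \textbf{NP}-membership, a proposed edge set $D$ with $|D|\le k$ is a certificate of polynomial size, and in time $O(|E(H)|)$ one verifies that every edge of $E(H)\setminus D$ shares an endpoint with some edge of $D$.

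For \textbf{NP}-hardness, given a 3-SAT formula $\phi$ with variables $x_1,\dots,x_n$ and clauses $C_1,\dots,C_m$, I would construct a bipartite graph $H_\phi$ of maximum degree $3$ together with an integer $k_\phi$, using variable gadgets and clause gadgets connected through subdivided ``literal'' paths. A variable gadget would be a small even bipartite widget admitting exactly two canonical minimum edge dominating sets $D_i^{T}$ and $D_i^{F}$, encoding the truth values of $x_i$. A clause gadget would be a small bipartite subgraph with three entry edges, one per literal, designed so that unless at least one entry is dominated from outside, an additional edge must be paid inside the gadget. The entries would be linked to the appropriate variable gadget through subdivided paths; subdivision is the standard device both for capping the maximum degree at $3$ and for preserving bipartiteness, since a path of even length (respectively odd length) can be inserted to match or flip the bipartition class at will. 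The threshold $k_\phi$ would then be the sum of canonical variable-gadget costs plus the minimum clause-gadget cost under the assumption that every clause has a dominated entry. Correctness follows by showing (i) a satisfying assignment induces an edge dominating set of $H_\phi$ of size exactly $k_\phi$, and (ii) any edge dominating set of size at most $k_\phi$ can be locally modified, without increasing its size, so that each variable gadget uses one of $D_i^{T}$ or $D_i^{F}$, whence domination of every clause gadget forces a satisfied literal in every clause.

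The main obstacle I anticipate is the gadget design itself: satisfying bipartiteness, the max-degree-$3$ constraint, and the ``two canonical choices'' property simultaneously is delicate, because the most convenient constructions in unrestricted graphs tend to use odd cycles or high-degree hubs and must be carefully reworked. Equally delicate is the exchange lemma needed for direction (ii); one must show that every optimal edge dominating set can be pushed into canonical form on each variable gadget by a bounded sequence of local swaps, since otherwise the threshold argument breaks down. Most of the proof's work will therefore go into a small-case analysis of the variable and clause gadgets and into tuning the arithmetic so that $k_\phi$ separates satisfiable from unsatisfiable instances exactly.
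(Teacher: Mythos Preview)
The paper does not prove this theorem at all: it is quoted as a known result of Yannakakis and Gavril and is used as a black box in the reduction establishing Theorem~\ref{theorem1}. There is therefore nothing in the paper to compare your proposal against.

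As a separate matter, what you have written is a plan rather than a proof. The general architecture you describe --- a gadget reduction from a restricted SAT variant, with variable gadgets admitting two canonical minimum edge dominating sets and clause gadgets penalised unless an incoming literal edge is dominated --- is indeed the shape of the original Yannakakis--Gavril argument. But you have not actually produced any gadgets, and you explicitly flag the gadget design and the canonicalisation (exchange) lemma as unresolved. Until those are supplied and verified, there is no proof here, only an outline of where a proof would have to go. If you intend to reproduce the result rather than cite it, you must either exhibit concrete bipartite max-degree-$3$ gadgets with the stated properties and carry out the counting, or follow your alternative suggestion and reduce from a problem (such as vertex cover in cubic graphs) whose connection to edge domination is more direct.
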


In \cite{Yannakakis}, we know that when a graph possesses a minimum edge dominating set with size $k$, $k$ is a positive integer, the size of its minimum independent edge dominating set is also $k$. Yannakakis and Gavril showed that after giving a minimum edge dominating set, they are able to use a polynomial time algorithm to construct a minimum independent edge dominating set, and in their paper, they presented that algorithm. Thus the above problem is amount to the independent edge dominating set problem.

{\bf The independent edge dominating set problem:} After giving a positive integer $k$ and a graph $H$, determine whether there exists an independent edge dominating set with at most $k$ edges in $H$.

By the equivalency of above two problems, the following corollary holds.

\begin{corollary}[\upshape Yannakakis and Gavril~\cite{Yannakakis}]\label{independentedgedominatingproblem}
Given a positive integer $k$ and a bipartite graph $H$ with $\Delta(H)=3$, determine whether there is an independent edge dominating set with at most $k$ edges in $H$
is $\mathbf{NP}$-complete.
\end{corollary}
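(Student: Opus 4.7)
The plan is to derive this corollary directly from Theorem~\ref{edgedominatingproblem} via the observation, attributed to Yannakakis and Gavril and recalled in the paragraph just above, that the minimum edge dominating number and the minimum independent edge dominating number coincide for every graph, together with a polynomial-time procedure that converts any edge dominating set into an independent edge dominating set of no larger size. Concretely, I would exhibit an identity reduction between the two decision problems that preserves both the underlying graph $H$ and the parameter $k$, so that the structural hypotheses (bipartite, $\Delta(H)=3$) are inherited for free.

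First I would dispatch NP-membership: given a candidate set $M\subseteq E(H)$, one checks in polynomial time that $|M|\leq k$, that $M$ is a matching (no two edges of $M$ share an endpoint), and that every edge of $E(H)\setminus M$ is incident with some edge of $M$. This yields a polynomial-time verifier, placing the problem in $\mathbf{NP}$.

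For NP-hardness, I would reduce from the edge dominating set problem restricted to bipartite graphs with $\Delta(H)=3$, which is $\mathbf{NP}$-complete by Theorem~\ref{edgedominatingproblem}. Given an instance $(H,k)$ of that problem, the reduction outputs the same instance $(H,k)$ for the independent edge dominating set problem. The forward direction is immediate, since every independent edge dominating set is in particular an edge dominating set. For the converse, I would invoke the Yannakakis--Gavril transformation: from any edge dominating set $D$ of $H$ with $|D|\leq k$, one constructs in polynomial time an independent edge dominating set $D'$ with $|D'|\leq|D|\leq k$. Consequently $(H,k)$ is a Yes-instance of one problem if and only if it is a Yes-instance of the other, and the whole reduction is computable in (trivial) polynomial time.

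There is no substantial obstacle here; the only point I would flag carefully is that the reduction must respect the input class, since Theorem~\ref{edgedominatingproblem} asserts $\mathbf{NP}$-completeness only for bipartite graphs of maximum degree three. Because our reduction does not alter $H$ at all, the hypotheses on $H$ transfer verbatim, and no further care is needed. The write-up will therefore consist of little more than a careful statement of the verifier, a one-line description of the identity reduction, and a pointer to the Yannakakis--Gavril algorithm for the non-trivial direction of equivalence.
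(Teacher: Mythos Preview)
Your proposal is correct and follows exactly the paper's own justification: the paper derives the corollary immediately from Theorem~\ref{edgedominatingproblem} via the Yannakakis--Gavril equivalence between minimum edge dominating sets and minimum independent edge dominating sets, with no further argument. Your write-up is simply a more explicit rendering of the same identity reduction, including the routine $\mathbf{NP}$-membership check that the paper omits.
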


We present the matching vertex-cutset problem as follows.

{\bf The matching vertex-cutset problem:} After giving a positive integer $k$ and a graph $H$, determine if there exists a matching vertex-cutset in $H$ with size at most $k$.

For matching vertex-cutsets, the following observation holds.

\begin{observation}\label{Lowerbound}
If a graph $H$ possesses a matching vertex-cutset, then $\kappa(H)\le 2\kappa_M(H)$.
\end{observation}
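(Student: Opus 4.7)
The plan is to bound $\kappa(H)$ directly using a minimum matching vertex-cutset. Let $M$ be a matching vertex-cutset of $H$ with $|M|=\kappa_M(H)$. Since the edges of $M$ are pairwise non-adjacent by definition of a matching, every edge contributes two distinct vertices to $V(M)$, so $|V(M)|=2|M|=2\kappa_M(H)$. This is the key numerical fact that ties $|V(M)|$ to $\kappa_M(H)$.

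Next I would split on the two possibilities built into the definition of a matching vertex-cutset. If $H-V(M)$ is disconnected, then $V(M)$ is a vertex cut of $H$ in the usual sense, and the minimality in the definition of $\kappa(H)$ yields
\begin{equation*}
\kappa(H)\le |V(M)|=2\kappa_M(H).
\end{equation*}
If instead $H-V(M)$ is trivial, then $|V(H)|-|V(M)|\le 1$, i.e.\ $|V(H)|\le 2\kappa_M(H)+1$. Combining this with the general bound $\kappa(H)\le |V(H)|-1$ gives $\kappa(H)\le 2\kappa_M(H)$ again.

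The only mild subtlety is the trivial case, which is why the definition of matching vertex-cutset allows it; once the universal inequality $\kappa(H)\le |V(H)|-1$ is invoked, both cases collapse into the same bound, and the observation follows.
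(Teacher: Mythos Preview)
Your proof is correct. The paper states this observation without proof, treating it as immediate; your argument is precisely the natural justification, and your handling of the ``trivial'' case via the universal bound $\kappa(H)\le |V(H)|-1$ is the right way to close that gap.
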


\begin{theorem}\label{theorem1}
The matching vertex-cutset problem for a graph which is neither $K_{2n}$ nor $K_{n, n}$ is $\mathbf{NP}$-complete.
\begin{proof}
Since one can examine polynomial time if a matching is a matching vertex-cutset, the matching vertex-cutset problem for a graph which is neither $K_{2n}$ nor $K_{n,n}$ is in $\mathbf{NP}$. It remains to show that the matching vertex-cutset problem is $\mathbf{NP}$-hard, and we shall demonstrate it by reducing the edge dominating set problem to the matching vertex-cutset problem.

Given a positive integer $k$ and a bipartite graph $H$ with bipartition $X$ and $Y$ satisfying maximum $3$ and $k\le min\{|X|,|Y|\}$, our idea is to construct a new graph $H'$ based on the graph $H$, and show that there is a matching vertex-cutset with size at most $k$ in $H'$ is equivalent to that there is an independent edge dominating set with size at most $k$ in $H$. The graph $H'$ we constructed is shown below. On the basis of the graph $H$, we add two stable sets $X'$ and $Y'$ satisfying $|X'|=|X|$ and $|Y'|=|Y|$, add all edges between $X$ and $X'$, between $Y$ and $Y'$, satisfying for any $x\in X, x'\in X', xx'\in E(H')$, for $y\in Y$, $y'\in Y'$, $yy'\in E(H')$. See Figure~1 for details.

\begin{center}
\scalebox{0.5}[0.5]{\includegraphics{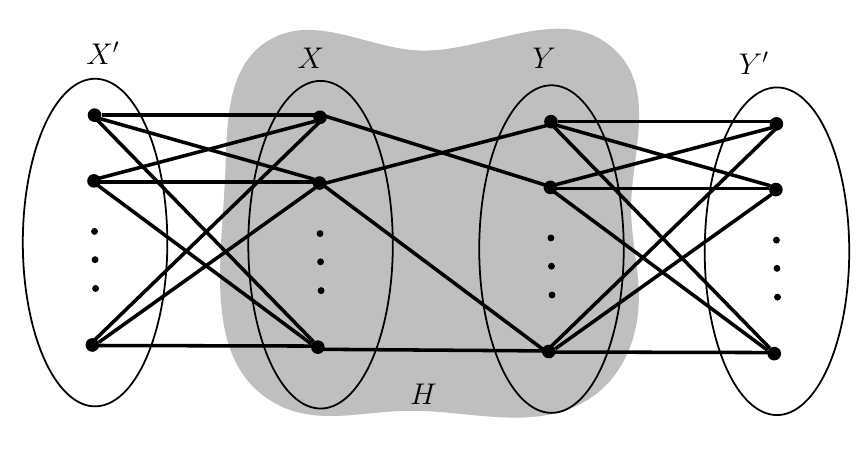}}

\small
Figure 1. The graphs $H$ and $H'$.
\end{center}

On one hand, if there exists an independent edge dominating set $M$ with $|M|\leq k$ in $H$, then $H-V(M)$ is disconnected or trivial, that means, $M$ is a matching vertex-cutset of $H'$ by the definition of matching vertex-cutset. So there exists a matching vertex-cutset with size at most $k$ in $H'$.

On the other hand, suppose that the minimum size of a matching cutset $M'$ in $H'$ is at most $k$, but the minimum size of an independent edge dominating set of $H$ is larger than $k$. Then $min\{|X|,|Y|\}\geq k+1$, so $|X'|, |Y'|\geq k+1$ since $|X'|=|X|$ and $|Y'|=|Y|$.
Hence $X\setminus V(M')\neq\emptyset$, $X'\setminus V(M')\neq\emptyset$, $Y\setminus V(M')\neq\emptyset$, and $Y'\setminus V(M')\neq\emptyset$.
Because $H[X\cup X']$ and $H[Y\cup Y']$ are complete bipartite graphs, the graph $H[X\cup X']\setminus V(M)$ and $H[Y\cup Y']\setminus V(M)$ are connected. Moreover, the graph $H-V(M')$ has at least one edge, since $M'$ is a matching in $H$ with size at most $k$, and the minimum size of an independent edge dominating set of $H$ is larger than $k$. Thus $H'-V(M')$ is connected, a contradiction.

The proof is end.
\end{proof}
\end{theorem}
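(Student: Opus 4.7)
The plan is to prove Theorem~\ref{theorem1} along the standard NP-completeness template: verify membership in NP, then polynomially reduce a known NP-hard problem to the matching vertex-cutset problem. Membership in NP is straightforward, since a matching $M$ of size at most $k$ certifies a yes-instance and one can check in polynomial time that $M$ is a matching and that $H - V(M)$ is disconnected or trivial. For hardness, the promising source problem is the one highlighted in Corollary~\ref{independentedgedominatingproblem}: the independent edge dominating set (IEDS) problem on bipartite graphs of maximum degree~$3$.

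Given an instance $(H,k)$ of IEDS with bipartition $(X,Y)$, I would first assume $k < \min\{|X|,|Y|\}$ (a harmless padding by isolated vertices on each side), and then build a new graph $H'$ by attaching two fresh independent sets $X', Y'$ with $|X'|=|X|$ and $|Y'|=|Y|$, joining every vertex of $X$ to every vertex of $X'$, and every vertex of $Y$ to every vertex of $Y'$. The guiding intuition is that the two complete bipartite ``cushions'' on $X\cup X'$ and $Y\cup Y'$ are very hard to disconnect by removing the vertex set of a small matching, so any matching vertex-cutset of $H'$ must instead attack the $H$-edges between $X$ and $Y$ by dominating them.

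For the equivalence, the forward direction is the easy one: if $M$ is an IEDS of $H$ with $|M|\le k$, then $M$ is a matching in $H'$ whose removal of $V(M)$ destroys every surviving $X$-to-$Y$ edge, so the two cushions become separate components of $H'-V(M)$. The reverse direction is where I expect the main work. Starting from a matching vertex-cutset $M'$ of $H'$ with $|M'|\le k$, I would first argue, using the independence of $X',Y'$ together with $|M'|\le k < |X|,|Y|,|X'|,|Y'|$, that each of $X\setminus V(M')$, $X'\setminus V(M')$, $Y\setminus V(M')$, $Y'\setminus V(M')$ is nonempty; the biclique structure of the cushions then forces each cushion to remain internally connected after deletion. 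Consequently, the only way $H'-V(M')$ can disconnect is for $M'$ to dominate every $X$-$Y$ edge of $H$, i.e., $M'$ is an edge dominating set of $H$, which by the remark preceding Corollary~\ref{independentedgedominatingproblem} can be polynomially converted to an IEDS of the same size. Finally, I would verify that the constructed $H'$ is neither $K_{2n}$ nor $K_{n,n}$, which is immediate since $X'$ is independent in $H'$ (ruling out $K_{2n}$) and there are no edges between $X'$ and $Y'$ (ruling out $K_{n,n}$ in the only candidate bipartition $(X\cup Y',Y\cup X')$), so the restricted statement of the theorem is respected.
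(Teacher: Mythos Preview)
Your proposal is correct and matches the paper's proof almost step for step: the same gadget $H'$ with complete bipartite ``cushions'' $X\cup X'$ and $Y\cup Y'$, the same forward/backward equivalence, and the same connectivity argument for the cushions after deleting $V(M')$. The one point to tighten in the reverse direction is that $M'$ is a matching in $H'$, not necessarily in $H$, so before declaring it an edge dominating set of $H$ you should replace each $M'$-edge meeting $X'\cup Y'$ by an $H$-edge incident to its $X\cup Y$-endpoint (or discard it if that endpoint is isolated in $H$) and only then invoke the Yannakakis--Gavril conversion; the paper is equally casual on exactly this step.
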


\subsection{The algorithm}
In this subsection, we provide an approximation algorithm for the following problem.

{\bf The minimum matching vertex-cutset problem:} In a graph $H$, where $H$ is neither $K_{2n}$ nor $K_{n,n}$, finding its minimum matching vertex-cutset.

For the minimum matching vertex-cutset problem, we have a $2$-approximation algorithm.

\begin{theorem}\label{Algorithmvalid}
For the minimum matching vertex-cutset problem, there exists a $2$-approximation algorithm in $O(nm^{2})$.
\end{theorem}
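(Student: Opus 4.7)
The plan is to leverage Observation~\ref{Lowerbound}, which gives $\kappa(H)\le 2\kappa_{M}(H)$ whenever $H$ admits a matching vertex-cutset. Consequently, any matching vertex-cutset $M$ of $H$ that we can produce with $|M|\le \kappa(H)$ automatically satisfies $|M|\le 2\kappa_{M}(H)$ and thus gives a $2$-approximation. The algorithm therefore proceeds in two phases: first compute a minimum vertex cut $S$ of $H$, then convert $S$ into a matching vertex-cutset $M$ of size at most $|S|$.

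For the first phase, a global minimum vertex cut $S$ of $H$ of size $\kappa(H)$ can be found by vertex-splitting together with a sequence of $(s,t)$-max-flow computations. With standard augmenting-path methods the phase can be carried out in $O(nm^{2})$ time, matching the desired bound. Let $C_{1},\dots,C_{k}$ (with $k\ge 2$) be the connected components of $H-S$.

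For the second phase, I would compute a maximum matching $T$ of the induced subgraph $H[S]$ via the augmenting-path method, whose correctness is guaranteed by Theorem~\ref{Berge}, and put $U=S\setminus V(T)$. For every vertex $v\in U$ I then assign a distinct external partner $x_{v}\in V(H)\setminus V(T)$ with $vx_{v}\in E(H)$, formulating this assignment as a bipartite matching problem between $U$ and $V(H)\setminus V(T)$ whose feasibility is handled through Theorem~\ref{Hall}. The output is
\[
M \;=\; T\cup\{\,vx_{v}:v\in U\,\}.
\]
A direct count gives $|M|=|T|+|U|=|S|-|T|\le |S|=\kappa(H)\le 2\kappa_{M}(H)$, which establishes the approximation ratio. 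Moreover, since $V(M)\supseteq S$, any two vertices separated by $S$ in $H-S$ remain separated after deleting $V(M)$, provided that at least two of the sets $C_{i}\setminus V(M)$ are still non-empty, so $M$ is indeed a matching vertex-cutset.

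The delicate step is the partner-selection: one must choose the $x_{v}$'s simultaneously so that they form a matching \emph{and} so that no component $C_{i}$ of $H-S$ is completely absorbed by $V(M)$, since that would destroy the cut property of $V(M)$. I would handle this by refining the bipartite matching formulation with a ``component capacity'' that forbids any $C_{i}$ from being emptied, and arguing feasibility from the structural fact that, by the minimality of $S$, every vertex of $S$ has neighbours in at least two different components of $H-S$ (otherwise $S\setminus\{v\}$ would still be a vertex cut, contradicting $|S|=\kappa(H)$). Making the selection succeed in all corner cases—in particular when $H-S$ contains singleton components—is the main technical obstacle, and is what makes the second phase nontrivial. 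Once it is settled, the overall running time is dominated by the min-cut computation, giving the claimed $O(nm^{2})$ bound.
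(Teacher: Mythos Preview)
Your overall plan coincides with the paper's: compute a minimum vertex cut $S$, take a maximum matching $T$ inside $H[S]$, then match the leftover independent set $S\setminus V(T)$ into $V(H)\setminus S$ via Hall's theorem, outputting a matching of size $|S|-|T|\le|S|=\kappa(H)\le 2\kappa_M(H)$. The approximation ratio and the running-time bookkeeping are therefore fine.

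The genuine gap is precisely the step you yourself label ``the main technical obstacle'' and then leave open. Proposing to add a ``component capacity'' to the bipartite matching is not a proof, and in fact such a device cannot succeed in general: there are inputs in which \emph{every} matching that saturates $S\setminus V(T)$ necessarily absorbs all of $V(H)\setminus S$ (this happens exactly when $|S\setminus V(T)|=|V(H)\setminus S|$). In that situation no choice of the $x_v$'s leaves a single vertex outside $V(M)$, and the paper resolves it not by a clever selection but by a structural argument showing that then $H\cong K_{n,n}$, which is excluded by hypothesis. Similarly, when only one component $C_i$ retains vertices after deleting $V(M)$ and $|C_i\setminus V(M)|\ge 2$, containing $S$ in $V(M)$ does not by itself make $H-V(M)$ disconnected or trivial; the paper repairs this by explicitly swapping one edge of the external matching for a suitable edge incident with the surviving side. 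None of this case analysis appears in your proposal, and without it there is no guarantee that the output $M$ is a matching vertex-cutset at all, so the algorithm is not proved correct.
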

\begin{proof}
Let $H$ be a simple connected graph which is neither $K_{2n}$ nor $K_{n,n}$.
By running Max-Flow Min-Cut Algorithm \cite{Ford}, we can find a minimum vertex cut $S$ in $O(nm^{2})$. For such vertex cut $S$, we would like to construct a matching vertex-cutset $M$ with $|M|\le |S|$ in $O(nm^{2})$. By Observation \ref{Lowerbound}, our algorithm has the approximation ratio $2$.

Let $U$ be a connected component of $H-S$, and let $V=H-S-U$.
By running Hungary Algorithm \cite{Edmonds}, we are able to find a maximum matching $M_1$ of $H[S]$ in $O(nm)$. If $V(M_1)=S$, then $M_1$ is a matching vertex-cutset of $H$, and we are done; otherwise, let $S_1=S\cap V(M_1)$, $S_2=S\setminus S_1$. Since $S_1$ is a maximum matching, the set $S_2$ is an independent set of $H$.

If $|S_2|=1$, then we assume that $x\in S_2$. Since $S$ is the minimum vertex cut of $H$, we have that $N_H(x)\cap U\neq\emptyset$ and $N_H(x)\cap V\neq\emptyset$. If $|U|\ge 2$, then pick $x'\in N_H(x)\cap U$, and $M_1\cup \{xx'\}$ is a matching vertex-cutset. Otherwise, $x'\in N_H(x)\cap V$, and $M_1\cup \{xx'\}$ is a matching vertex-cutset.

If $|S_2|\ge 2$, we give a thought to the maximal bipartite subgraph $H'$ of $H$ with bipartition $(S_2, U\cup V)$.
For a vertex $x\in S_2$, $|N_H(x)\cap (U\cup V)|\geq|S_2|$ because
$|S_1|+|N_H(x)\cap (U\cup V)|\geq d_H(x)\geq\delta(H)\geq\kappa(H)=|S|=|S_1|+|S_2|$. Hence, $H'$ has a matching that covers $S_2$ by Theorem \ref{Hall}. As $S$ is a minimum vertex cut, one can pick $x,y\in S_2$, $x'\in U$, and $y'\in V$ satisfying $xx'\in E(H)$ and $yy'\in E(H)$.
By running Hopcroft-Karp Algorithm \cite{Hopcroft} in $O(nm)$, one can find such a matching $M_2$ from $\{xx',yy'\}$ that covers $S_2$ in $O(m\sqrt{n})$. Let $A=S_2\cap V(M_2)$, $B=S_2\setminus A$, $U_1=U\cap V(M_2)$, $U_2=U\setminus U_1$, $V_1=V\cap V(M_2)$, $V_2=V\setminus V_1$, $M_A=M_2\cap E_H[U,S]$ and $M_B=M_2\cap E_H[V,S]$.

Since $M_2$ is obtained from $\{xx',yy'\}$ by running Hopcroft-Karp Algorithm \cite{Hopcroft}, the sets
$M_A$ and $M_B$ are not empty.

{\it Case $1$.} $U_2\neq\emptyset$ and $V_2\neq\emptyset$.

Then $M_1\cup M_2$ is a matching vertex-cutset of $H$, and we output $M_1\cup M_2$.

{\it Case $2$.} $U_2=\emptyset$ and $V_2\neq\emptyset$.

   {\it Subcase $2.1$.} $|V_2|=1$.

   Then $M_1\cup M_2$ is a matching vertex-cutset of $H$, we output $M_1\cup M_2$. Now, $H-V(M_1\cup M_2)$ is travial.

   {\it Subcase $2.2$. $|V_2|\geq2$.}

   If $E_H[A,V_2]=\emptyset$, then $M_1\cup M_B$ is a matching vertex-cutset of $H$, we output $M_1\cup M_B$. Otherwise, that is, $E_H[A,V_2]\neq\emptyset$, then choose an edge $xy\in E_H[A,V_2]$ with $x\in A$, $y\in V_2$, and $(M_1\cup M_2\cup\{xy\})\setminus\{xx'\}$ is a matching vertex-cutset of $H$,
   where $xx'\in M_A$.

{\it Case $3$.} $U_2\neq\emptyset$ and $V_2=\emptyset$.

 One can solve this case similar to Case~$2$ by just switching  $V_2$ and $U_2$.

{\it Case $4$.} $U_2=\emptyset$ and $V_2=\emptyset$.

   {\it Subcase $4.1$.} $U\cup V$ is a stable set in $H$.

   When both $U_2=\emptyset$ and $V_2=\emptyset$. Under the circumstances, we have $U=U_1$ and $V=V_1$. For any $x\in S_2$ and $y\in S_1\cup U\cup V$, as $d_H(x)\geq\delta(H)\geq\kappa(H)=|S_1|+|S_2|=|S_1|+|U|+|V|$ and as $S_2$ is an independent set, we conclude that for any $x\in S_2$, $S_1\cup U\cup V\subseteq N_H(x)$. Therefore, $H[U\cup V,S_2]$ is a complete bipartite graph. What's more, we know that $S_1=\emptyset$. If not, then we pick $x,y\in S_1$ satisfying $xy\in M_1$, and pick $x',y'\in S_2$. By $H[U\cup V,S_2]$ is a complete bipartite graph, we have that $xx', yy'\in E(H)$, that is, $x'xyy'$ is an augmenting path in $H[S]$ in connection with $M_1$, a contradiction to Theorem \ref{Berge} and the hypothesis that $M_1$ is a maximum matching in $H[S]$. So $H$ is a complete bipartite graph with equal bipartition $(U\cup V,S_2)$, that is, $H\cong K_{n,n}$, when $n:=|U\cup V|=|S|$. There is a contradiction with our hypothesis that $H$ is not $K_{n,n}$, we output $\emptyset$.

  {\it Subcase $4.2$.} $U\cup V$ is not a stable set in $H$.

   Without sacrificing generality, since there exists no edge between $U$ and $V$, we pick $x,y\in U$ satisfying $xy\in E(H)$, let $xx', yy'\in M_A$. Then $(M_1\cup M_2\cup\{xy\})\setminus\{xx',yy'\}$ is a matching vertex-cutset of $H$, we output $(M_1\cup M_2\cup\{xy\})\setminus\{xx',yy'\}$.

   Above all, our algorithm is correct and run in $O(nm^{2})$.
\end{proof}

\begin{remark}\label{Existence}
On account of the proof of Theorem~\ref{Algorithmvalid}, one can see that each connected simple graph which is neither $K_{2n}$ nor $K_{n,n}$ has a matching vertex-cutset.
\end{remark}

\section{The matching connectivity of plane graphs}
In this section, we study the matching connectivity of plane graphs.

For a plane graph $H$, let $F(H)$ be the set of all faces in $H$.
The {\it degree} $d(f)$ of a face $f$ is the number of edges in its boundary, cut edges being counted twice.

The following two results will be used in our proof.

\begin{theorem}[\upshape Euler~\cite{Euler}]\label{Euler 1}
In a connected plane graph $H$, we have that $|V(H)|-|E(H)|+|F(H)|=2$.
 \end{theorem}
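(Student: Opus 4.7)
The plan is to prove Euler's formula by induction on $|E(H)|$, keeping $H$ connected throughout. The inductive handle is that deleting an edge on a cycle preserves connectivity and reduces the face count by exactly one, so the alternating sum $|V|-|E|+|F|$ is invariant under such a deletion.

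For the base case, I would take $H$ to be acyclic. A connected acyclic graph is a tree, so $|E(H)|=|V(H)|-1$, and a plane tree has a single face (the unbounded outer face), i.e.\ $|F(H)|=1$. Hence $|V(H)|-|E(H)|+|F(H)|=|V(H)|-(|V(H)|-1)+1=2$. For the inductive step, suppose $H$ has a cycle $C$ and pick any edge $e\in E(C)$. Since $e$ belongs to a cycle it is not a bridge, so $H-e$ is still connected with $|E(H-e)|=|E(H)|-1$ and $|V(H-e)|=|V(H)|$. In the plane embedding the two sides of $e$ lie in two distinct faces $f_1,f_2$, and deleting $e$ merges them into a single face of $H-e$, giving $|F(H-e)|=|F(H)|-1$. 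Applying the inductive hypothesis to $H-e$ and rearranging yields the formula for $H$.

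The main obstacle is topological rather than combinatorial: to justify that an edge of a cycle really does separate two distinct faces of the embedding, one invokes the Jordan curve theorem applied to $C$ (the interior and exterior of $C$ are disjoint, and $e$ has these two regions on its two sides). Once that is accepted as a standing fact about plane graphs, the remainder is a routine inductive bookkeeping argument and no further subtlety arises.
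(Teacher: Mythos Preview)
Your argument is correct and is the standard induction-on-edges proof of Euler's formula; the identification of the Jordan curve theorem as the one genuinely topological ingredient is accurate. However, there is nothing to compare against: the paper does not supply its own proof of this statement. It simply quotes Euler's formula as a classical result with a citation to \cite{Euler} and then uses it (together with the face--degree sum formula) as input to the discharging argument in Theorem~\ref{planargraphupper}.
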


\begin{theorem}[\upshape Euler~\cite{Euler}]\label{Euler 2}
In a plane graph $H$, we have that $\Sigma_{f\in F}d(f)=2|E(H)|$.
\end{theorem}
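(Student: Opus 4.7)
The plan is to prove the identity by a double-counting argument on pairs $(e,f)$ where the edge $e \in E(H)$ appears in the boundary walk of the face $f \in F(H)$, counted with appropriate multiplicity. Summing by faces recovers $\sum_{f\in F(H)} d(f)$ directly from the given definition of $d(f)$. Summing by edges requires determining, for each $e \in E(H)$, the total number of times $e$ is traversed across all face boundary walks.

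To handle the edge-side of the count, I would perform a case analysis on each $e \in E(H)$. If $e$ is not a cut edge of $H$, then $H - e$ remains connected, and a standard planarity argument (ultimately invoking the Jordan curve theorem applied to a cycle through $e$) implies that $e$ lies on the boundary of exactly two distinct faces $f_1, f_2 \in F(H)$; in the boundary walk of each, $e$ appears exactly once, so $e$ contributes $1 + 1 = 2$ to the total sum. If $e$ is a cut edge, then the plane region adjacent to $e$ on each side belongs to the same face $f$; the boundary walk of $f$ traverses $e$ once in each direction, which is precisely why the stated definition of $d(f)$ counts cut edges twice. Hence $e$ again contributes $2$ to the sum.

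Combining the two cases, every edge contributes exactly $2$ to $\sum_{f\in F(H)} d(f)$, and summing over $E(H)$ yields $\sum_{f\in F(H)} d(f) = 2|E(H)|$, as desired.

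The main obstacle is the topological justification that every non-cut edge borders exactly two distinct faces while every cut edge borders only one. This is not a purely combinatorial fact; it rests on the Jordan curve theorem and the standard theory of plane embeddings. Once this geometric input is taken for granted, the remainder of the argument is a straightforward rearrangement of a double sum and needs no further calculation.
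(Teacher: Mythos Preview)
Your argument is the standard double-counting proof of the face handshaking lemma and is correct; the only subtlety you flag---that non-cut edges separate two faces while cut edges lie on a single face and are traversed twice by its boundary walk---is exactly the topological input needed, and once granted the rest is immediate. Note, however, that the paper does not supply its own proof of this statement: it is quoted as a classical result attributed to Euler (Theorem~\ref{Euler 2}) and used as a black box in the discharging argument for Theorem~\ref{planargraphupper}, so there is no in-paper proof to compare against.
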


On account of the proof Theorem \ref{Algorithmvalid}, we get a proposition.

\begin{proposition}\label{Upperbound}
If there exists a matching vertex-cutset in a graph $H$, then $\kappa_M(H)\le \delta(H)$.
\end{proposition}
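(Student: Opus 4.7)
The plan is to extract from the proof of Theorem~\ref{Algorithmvalid} the slightly stronger quantitative statement that the matching vertex-cutset output by its algorithm has size at most that of the minimum vertex cut used to initialise it. First I would verify that the two excluded graph classes in Theorem~\ref{Algorithmvalid}, namely $K_{2n}$ and $K_{n,n}$, in fact possess no matching vertex-cutset whatsoever: deleting the endpoints of a matching of size $k$ from them leaves $K_{2n-2k}$ or $K_{n-k,n-k}$, which on nonempty vertex sets is always connected with at least two vertices and hence neither disconnected nor trivial. Consequently, under the hypothesis that $H$ admits a matching vertex-cutset, $H$ is neither of these graphs, so the algorithm of Theorem~\ref{Algorithmvalid} is applicable to $H$.

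Next I would run that algorithm starting from a minimum vertex cut $S$ of $H$, so that $|S|=\kappa(H)$, and perform a per-case audit showing that the output matching vertex-cutset $M$ satisfies $|M|\le|S|$. The two key identities are $|M_1|=|S_1|/2$ (since $M_1$ is a matching that saturates $S_1$) and $|M_2|=|S_2|$ (since each edge of the bipartite matching $M_2$ meets $S_2$ in exactly one endpoint). With these in hand, Case~$1$ and Subcase~$2.1$ give $|M|=|M_1|+|M_2|=|S_1|/2+|S_2|\le|S|$; the $|S_2|=1$ branch gives $|M|=|S_1|/2+1\le|S|$; Subcase~$2.2$ and Case~$3$ either produce $M_1\cup M_B$ (at most the same size) or swap one edge of $M_2$ for one edge of $E_H[A,V_2]$, preserving the bound; and Subcase~$4.2$ strictly improves it by removing two edges and adding one. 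Subcase~$4.1$ does not arise since $H\not\cong K_{n,n}$.

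Combining $|M|\le|S|=\kappa(H)$ with the standard inequality $\kappa(H)\le\delta(H)$ yields $\kappa_M(H)\le|M|\le\delta(H)$, as required. I do not anticipate a substantive obstacle; the whole argument is a size bookkeeping exercise on an algorithm already proved correct, and the only care needed is the parity observation that $|M_1|=|S_1|/2$ and the observation that each swap step in the algorithm never increases the size of $M$ beyond $|M_1|+|M_2|$.
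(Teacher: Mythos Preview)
Your proposal is correct and follows essentially the same route as the paper: the paper states only that the proposition is obtained ``on account of the proof of Theorem~\ref{Algorithmvalid}'', and what you outline is precisely the extraction of the bound $|M|\le |S|=\kappa(H)\le\delta(H)$ that is already asserted (though not audited case by case) in that proof. Your additional step of checking that $K_{2n}$ and $K_{n,n}$ admit no matching vertex-cutset, so that the hypothesis places $H$ in the scope of Theorem~\ref{Algorithmvalid}, is a detail the paper leaves implicit but is indeed needed.
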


The method of discharging plays a significant role in proving the Four-Color theorem. Initially, each vertex $v$ is arranged with a weight, such as $d(v) - 4$, that is, {\it charge} of the vertex; every face $f$ is arranged with a weight, such as $d(f) - 4$, that is, {\it charge} of the face. The charges are arranged satisfying that the number they add up to is a positive (or negative) number. We are trying to discharge $ H $, making the charge at every vertex and each face non-positive (or non-negative) by redistributing the charges. Since there is no charge increase or decrease, but just redistribution, this process can lead to a contradiction.
Now, we ready to bound the matching connectivity of a plane graph.

\begin{theorem}\label{planargraphupper}
If $H$ is a connected plane graph satisfying $H\not\in\{K_2,K_4\}$,
then $\kappa_{M}(H)\leq 3$.
\end{theorem}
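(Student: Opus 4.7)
The plan is to proceed by contradiction, assuming a connected plane graph $H\notin\{K_2,K_4\}$ satisfies $\kappa_M(H)\ge 4$. Combining Proposition~\ref{Upperbound} with Remark~\ref{Existence} this forces $\delta(H)\ge 4$ (and also guarantees a matching vertex-cutset exists at all, since the only planar $K_{2n}$ are $K_2,K_4$ and the only planar $K_{n,n}$ beyond them is $C_4=K_{2,2}$, for which the matching $\{u_1u_2,u_3u_4\}$ suffices). The standard planarity bound $|E(H)|\le 3|V(H)|-6$ then yields $\delta(H)\le 5$, while Theorems~\ref{Euler 1} and \ref{Euler 2} give the charge identity $\sum_{v}(d(v)-4)+\sum_{f}(d(f)-4)=-8$; since $\delta(H)\ge 4$ makes the vertex contribution nonnegative, triangular faces must account for the deficit.

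Fix a vertex $v$ of minimum degree, so $d(v)\in\{4,5\}$, and let $u_1,\dots,u_{d(v)}$ be its cyclic neighbors in the planar embedding. In the cleanest subcase, $d(v)=4$ with all four incident faces triangular, $N(v)$ induces the 4-cycle $u_1u_2u_3u_4$, and $M=\{u_1u_2,u_3u_4\}$ is a matching with $V(M)=N(v)$ isolating $v$ in $H-V(M)$, giving $\kappa_M(H)\le 2$, a contradiction. When some face at $v$ has length $\ge 4$, the missing edge $u_iu_{i+1}$ is replaced by $vu_j$ for a non-incident index $j$, combined with a third edge extending to $u_i$ or $u_{i+1}$, producing a matching vertex-cutset of size $\le 3$ that still isolates $v$.

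The case $d(v)=5$ is the main obstacle. A size-3 matching can cover $N[v]$, but $H-N[v]$ need not be disconnected: the icosahedron shows that $\{vu_1,u_2u_3,u_4u_5\}$ leaves the opposite six-vertex hemisphere connected. Here I would instead pick a matching of the form $\{vu_i,u_ju_k,wx\}$ with the third edge $wx$ reaching outside $N(v)$, chosen so that the resulting six-vertex set crosses the planar embedding and separates it. Existence of such an edge $wx$ is forced by a discharging step: transfer positive charge from faces of length $\ge 4$ and from vertices of degree $\ge 6$ onto incident triangles, so that the remaining configurations all contain the required structure. The delicate point is showing the resulting $V(M)$ is genuinely a vertex cut — not merely a set containing $N[v]$ — which hinges on careful use of the cyclic face-order around $v$ together with the fact that Euler's formula bounds the amount of triangulation $H$ can sustain, eventually producing a configuration where the two sides of the planar embedding across $V(M)$ cannot be bridged.
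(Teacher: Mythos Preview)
Your overall strategy---fix a single vertex $v$ of minimum degree and manufacture a matching vertex-cutset of size $\le 3$ near it---is not the paper's strategy, and it does not close. The paper's discharging argument runs the other way: assuming $\kappa_M(H)\ge 4$, it derives a constraint at \emph{every} vertex (each degree-$4$ vertex has $t(v)=0$, each degree-$5$ vertex has $t(v)\le 2$), and only then sends $\tfrac{1}{3}$ from every vertex to each incident triangle. With those constraints, every vertex and every face ends nonnegative, contradicting $\sum_v(d(v)-4)+\sum_f(d(f)-4)=-8$. The local constructions of size-$3$ matchings are used solely to \emph{prove} those constraints, not to exhibit a global cutset at one chosen $v$.

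Concretely, your plan breaks in two places. First, in the $d(v)=4$ case with a non-triangular face you propose using an edge $vu_j$; but then $v\in V(M)$, so $v$ is removed rather than isolated, and your ``still isolates $v$'' claim is false. Worse, if $H[N(v)]$ has no edge at all (which is exactly what the paper shows must hold when $\kappa_M(H)\ge 4$), then no size-$3$ matching avoiding $v$ can cover the four vertices of $N(v)$, so a local construction at this $v$ is impossible in principle. Second, in the $d(v)=5$ case you take $\{vu_1,u_2u_3,u_4u_5\}$, which again removes $v$ and forces you to worry about whether $H-N[v]$ is connected. The paper avoids this entirely: whenever $t(v)\ge 3$ there are two independent edges $u_iu_j,u_ku_\ell$ inside $N(v)$, and one matches the fifth neighbour $u_r$ to some $u_r'\in N(u_r)\setminus\{v\}$; the resulting matching covers $N(v)$ but not $v$, so $v$ is isolated in $H-V(M)$ and (since $|V(H)|\ge 9$) the graph is disconnected. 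Your icosahedron worry evaporates under this construction---there $t(v)=5$ at every vertex, and the paper's matching works immediately. What remains unhandled by any local construction is the regime $t(v)\le 2$ for degree-$5$ vertices and $t(v)=0$ for degree-$4$ vertices, and that is precisely what the global discharging rules out.
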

\begin{proof}
By Remark \ref{Existence}, each simple connected plane graph which satisfies $H\not\in\{K_2,K_4\}$ has a matching vertex-cutset. Suppose, by way of contradiction, $H$ is a simple connected plane graph with $\kappa_{M}(H)\geq4$. We have $|V(H)|\ge 9$.

Our proof uses the discharging technique. We multiply both sides of Euler's formula by $4$, we can have that in a plane graph $H$, $(2|E(H)|-4|V(H)|)+(2|E(H)|-4|F(H)|)=-8$. By Theorem \ref{Euler 2}, this formula amounts to $\Sigma_{v\in V}(d(v)-4)+\Sigma_{v\in V}(d(f)-4)=-8$.

Firstly, for each vertex $v$ in $H$, we arrange $d(v)-4$ units of charge to it, and for every face $f$ in $H$, we arrange $d(f)-4$ units of charge to it. By this time, the sum of all the charges is $-8$, that is, negative. We stipulate some rules to rearrange the charges so that after the rearrangement, if the charge of each vertex and face is non-negative, but there is no charge increase or decrease in the graph, then there is a contradiction.

The rules for charge rearrangement is that every vertex $v\in V$ assigns $\frac{1}{3}$ unit of charge to every triangular face $f$ associated with $v$.

We now observe that every face has a nonnegative charge. We consider the degree of every face. If $d(f)\geq4$, then $d(f)-4\geq0$, face $f$ has a non-negative charge. If $f$ is a triangle, that is, $d(f)=3$, then $d(f)-4=-1$, as it is incident with exactly $3$ vertices, so it gains $\frac{1}{3}$ unit from each of them, and it has $-1+1=0$ unit of charge, so it is nonnegative.

Next, we consider the charge of every vertex after discharging. By assumption and Proposition \ref{Upperbound}, $\delta(H)\ge \kappa_{M}(H)\ge 4$. For some vertex $x\in V(H)$, let $t(x)$ denote the  number of triangular faces incident to $x$. Pick any $v\in V(H)$, we show that each vertex has a nonnegative charge after discharging.

If $d_H(v)=4$, then we assume that $N_G(v)=\{v_1, v_2, v_3, v_4\}$. We say that $v_iv_j\not\in E(H)$, where $1\leq i<j\leq 4$, that means, $v$ can not be incident with some triangular face; otherwise, without sacrificing generality,  assume $v_1v_2\in E(H)$. As $\delta(H)\geq4$, one can pick $v'_3\in N_H(v_3)\setminus\{v\}, v'_4\in N_H(v_4)\setminus\{v\}$.
Then $\{v_1v_2,v_3v'_3,v_4v'_4\}$ is a matching vertex-cutset of $H$ which is inconsistent with our hypothesis, $\kappa_{M}(H)\geq4$. So, after discharging, we have $d(v)-4=4-4=0$ .

If $d_H(v)=5$, then we assume that $N_H(v)=\{v_1, v_2, v_3, v_4, v_5\}$. We say that $v$ is incident with at most $2$ triangular faces, otherwise, there are distinct $i,j,k,\ell\in\{1,\ldots,5\}$ satisfying $v_iv_j, v_kv_{\ell}\in E(H)$. For $v_r\in N_H(v)\setminus\{v_i, v_j, v_k, v_{\ell}\}$, Since $\delta(H)\geq4$, one can pick $v'_r\in N_H(v_r)\setminus\{v\}$.
Then $\{v_iv_j, v_kv_{\ell}, v_rv'_r\}$ is a matching vertex-cutset of $H$ which is inconsistent with our hypothesis, $\kappa_{M}(H)\geq4$. After discharging, $v$ should give $\frac{1}{3}$ unit charge to every triangular face incident to it. Since $t(v)\leq 2$, we have $d(v)-4-\frac{1}{3}t(v) \geq\frac{1}{3}>0$.

If $d_H(v)\ge 6$, then $v$ is incident with at most $d_H(v)$ triangular faces, that is, $t(v)\leq d_H(v)$. After discharging, $v$ should give $\frac{1}{3}$ unit charge to each triangular face incident to it. As $t(v)\leq d_H(v)$, we have $d(v)-4-\frac{1}{3}t(v)\geq \frac{2}{3}d(v)-4\geq 0$.

Above all, every vertex and every face has a non-negative charge after discharging, which contradicts $\Sigma_{v\in V}(d(v)-4)+\Sigma_{v\in V}(d(f)-4)=-8$.

We complete our proof.
\end{proof}

\begin{remark}
For the icosahedron $G$ in Figure~2, on one hand, since $G$ is $5$-connected graph, $\kappa_M(G)\ge 3$. On the other hand, the matching $\{e_1,e_2,e_3\}$ is a matching vertex-cutset. Thus, $\kappa_M(G)=3$, and the upper bound is sharp in Theorem~\ref{planargraphupper}
\end{remark}

\begin{center}
\scalebox{0.5}[0.5]{\includegraphics{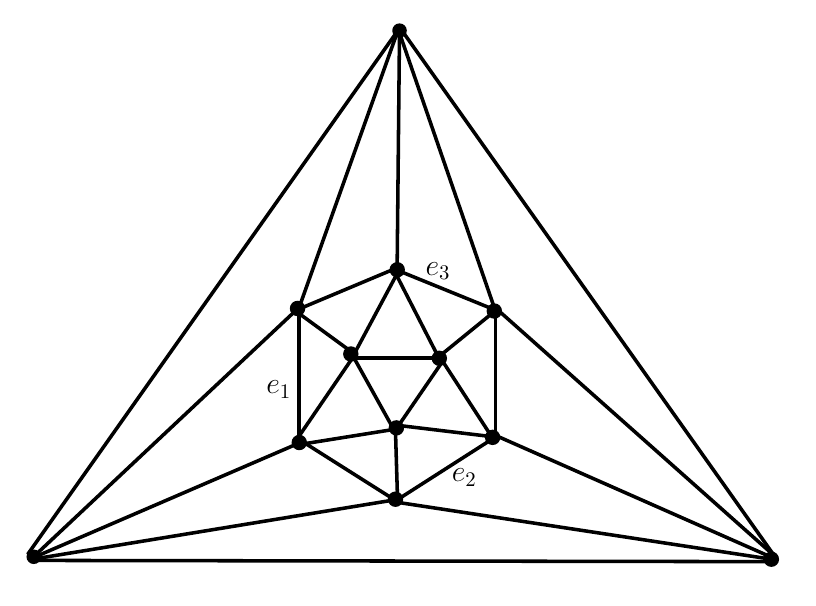}}

\small
Figure 2. The icosahedron $H$ and a matching vertex-cutset with size three.
\end{center}

A plane graph $H$ is a {\it maximal planar graph} if $H+uv$ is not a planar graph for any $u, v\in V(H), uv\notin E(H)$.

\begin{theorem}\label{planargraphlower}
If $H$ is a maximal planar graph, then $\kappa_M(H)\geq2$.
\end{theorem}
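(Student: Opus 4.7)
The plan is to derive the inequality $\kappa_M(H)\ge 2$ from Observation~\ref{Lowerbound} by first showing that any maximal planar graph $H$ with $|V(H)|\ge 4$ is $3$-connected. Granted that $\kappa(H)\ge 3$, Observation~\ref{Lowerbound} yields $3\le \kappa(H)\le 2\kappa_M(H)$, hence $\kappa_M(H)\ge \lceil 3/2\rceil=2$. The very small cases deserve only a brief remark: $K_4$ admits no matching vertex-cutset at all (a single edge leaves the connected graph $K_2$, and a perfect matching empties the graph, which is neither disconnected nor trivial), so the statement is vacuous there; for $|V(H)|\ge 5$ the argument below applies uniformly.

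To prove $3$-connectivity I fix a plane embedding and exploit the fact that every face of $H$ is a triangle. This forces, for each vertex $u\in V(H)$, the neighbours $N_H(u)$ to form a cycle $C_u$ in $H$, since consecutive neighbours in the rotation at $u$ are joined by an edge of the triangular face they bound. Suppose for contradiction that $\{x,y\}$ is a $2$-vertex cut, with components $A_1,A_2,\ldots$ in $H-\{x,y\}$. If $y\notin N_H(x)$, then the entire cycle $C_x$ lies in $H-\{x,y\}$ and is connected, so every component of $H-\{x,y\}$ must meet $C_x$ (a component avoiding $C_x$ would be unreachable from $x$ in $H$, contradicting connectedness of $H$), forcing $H-\{x,y\}$ to be connected, a contradiction. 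If instead $y\in N_H(x)$, then $y$ lies on $C_x$ and $C_x-y$ is a path $P_x$ of $H-\{x,y\}$ carrying every neighbour of $x$ except $y$; by the same reasoning, at most one component of $H-\{x,y\}$ can miss $P_x$, and any such component $B$ would be joined to the rest of $H$ only through $y$. Applying the analogous cyclic-neighbourhood argument at $y$, using the path $C_y-x$, rules out the existence of such a $B$, completing the contradiction and yielding $\kappa(H)\ge 3$.

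The main obstacle will be the subcase $y\in N_H(x)$: one must carefully exclude a component $B$ of $H-\{x,y\}$ whose only neighbour in $\{x,y\}$ is $y$. The symmetric argument at $y$ (using that $N_H(y)$ also forms a cycle because every face at $y$ is a triangle) collapses all neighbours of $y$ other than $x$ into a single path $C_y-x$, and showing that this path lies entirely in the component of $H-\{x,y\}$ already meeting $P_x$ is the subtle step. Once this is settled, $\kappa(H)\ge 3$ combined with Observation~\ref{Lowerbound} delivers $\kappa_M(H)\ge 2$.
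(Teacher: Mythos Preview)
Your route is genuinely different from the paper's. The paper gives a two-line direct argument: if a single edge $uv$ were a matching vertex-cutset, take vertices $v_1,v_2$ in distinct components of a plane drawing of $H-\{u,v\}$ lying on the outer face; then $H+v_1v_2$ is still planar, contradicting maximality of $H$. You instead prove the classical fact that a maximal planar graph on at least four vertices is $3$-connected and then feed $\kappa(H)\ge 3$ into Observation~\ref{Lowerbound}. Both strategies are valid; the paper's is shorter and uses nothing beyond the definition of ``maximal'', while yours is more structural and immediately yields the stronger template $\kappa_M(H)\ge\lceil\kappa(H)/2\rceil$ for any $H$ possessing a matching vertex-cutset.

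That said, your sketch of $3$-connectivity has a slip. In the case $y\notin N_H(x)$ you claim that ``a component avoiding $C_x$ would be unreachable from $x$ in $H$''. This is false: such a component could be attached solely to $y$, and $y$ is reachable from $x$ in $H$ through the component that does contain $C_x$. The repair is exactly the tool you already deploy in the other case: run the symmetric argument at $y$. Since $x\notin N_H(y)$, the cycle $C_y$ also lies entirely in $H-\{x,y\}$, so any component touching a neighbour of $y$ contains all of $C_y$; a shortest $x$--$y$ path in $H$ then shows $C_x$ and $C_y$ sit in the same component, and $H-\{x,y\}$ is connected. Conversely, the step you flag as the ``subtle'' one when $y\in N_H(x)$ is in fact the easier of the two: the edge $xy$ bounds two triangular faces $xyz$ and $xyw$, and the vertices $z,w$ lie on both $P_x=C_x-y$ and $P_y=C_y-x$, so these two paths already share vertices and lie in a common component of $H-\{x,y\}$.
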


\begin{proof}
Since $H$ is a maximal planar graph, we know that $H$ is connected, which implies that $\kappa_M(H)\geq1$. We demonstrate that $\kappa_M(H)\ge 2$.
Otherwise, suppose that $\{uv\}$ is a matching vertex-cutset.
Let $\tilde{H}$ be a planar embedding of $H$, and let $\tilde{H}_1, \tilde{H}_2, \cdots, \tilde{H}_m$ be the components of $\tilde{H}-\{u, v\}$, $m\geq2$.
For $i=1,2$, taking vertices $v_i\in V(H_i)$ such that $v_i$ lies bound of the outer face of $\tilde{H}$. Then $H+v_1v_2$ is a planar graph, a contradiction
\end{proof}

\begin{remark}
$K_5^-$ is a maximal planar graph, and $\kappa_M(K_5^-)=2$, where $K_5^-$ is the graph obtained from $K_5$ by deleting an edge. Thus, the lower bound is sharp in Theorem~\ref{planargraphlower}.
\end{remark}
\section{Conclusion}

In our research, we have demonstrated that the matching vertex-cutset problem for a graph that is neither $K_{2n}$ nor $K_{n,n}$ is $\mathbf{NP}$-complete. Moreover, we demonstrate that there is a $2$-approximation algorithm in $O(nm^2)$ to find a minimum matching vertex-cutset in a graph $H$. We summarize the paper by proposing two problems.

\begin{problem} Desiging a polynomial-time algorithm with an approximation ratio of less than $2$ for the minimum matching vertex-cutset problem.
\end{problem}

\begin{problem} Designing a more efficient $2$-approximation algorithm for the minimum matching vertex-cutset problem.
\end{problem}

\section{Acknowledgements}
This work was supported by the National Natural Science Foundation of China (No. 12101203) and China Postdoctoral Science Foundation (No. 2022M711075).

\end{document}